\def\Gr#1{\mathbf #1}
\def\RR{\mathbb{R}}
\def\NN{\mathcal{N}}
\def\E{\mathcal{E}}
\def\CE{\mathcal{CE}}
\def\GCN{\mathcal{GCN}}
\def\WW{\mathcal{W}}
\def\CW{\mathcal{CW}}
\def\gg{\mathbf{g}}
\def\kg{\mathbf{k}}
\def\pg{\mathbf{p}}
\def\sg{\mathbf{s}}
\def\tg{\mathbf{t}}
\def\ug{\mathbf{u}}
\def\vg{\mathbf{v}}
\def\wg{\mathbf{w}}
\def\xg{\mathbf{x}}
\def\yg{\mathbf{y}}
\def\zg{\mathbf{z}}
\def\alphag{\boldsymbol{\alpha}}
\def\mug{\boldsymbol{\mu}}
\def\Lg{\boldsymbol{\Lambda}}
\def\Pig{\boldsymbol{\Pi}}
\def\Sig{\boldsymbol{\Sigma}}
\def\Omg{\boldsymbol{\Omega}}
\def\Ag{\mathbf{A}}
\def\Bg{\mathbf{B}}
\def\Cg{\mathbf{C}}
\def\Ig{\mathbf{I}}
\def\Jg{\mathbf{J}}
\def\Kg{\mathbf{K}}
\def\Mg{\mathbf{M}}
\def\Rg{\mathbf{R}}
\def\Sg{\mathbf{S}}
\def\Trg{\mathbf{T_r}}
\def\Vg{\mathbf{V}}
\def\Wg{\mathbf{W}}
\newcommand\Mc{\widehat{\Gr{M}}}
\newcommand\Rc{\widehat{\Gr{R}}}
\newcommand\thetac{\widehat{\Gr{\theta}}}
\newcommand\alphagt{\overset{\sim}{\boldsymbol{\alpha}}}
\newtheorem{Thm}{Theorem}[section]
\newtheorem{Lem}{Lemma}[section]
\newtheorem{proof}{Proof}[section]
\newcommand\be{\left(}
\newcommand\ba{\left\{}
\newcommand\en{\right)}
\newcommand\disp{\displaystyle}
\newcommand\dsum{\displaystyle \sum}
\def\vec{\textup{vec}}
\begin{document}
\title{Asymptotic properties of robust complex covariance matrix estimates}

\author{Mélanie Mahot~\IEEEmembership{Student,~IEEE}\thanks{M.Mahot is with SONDRA, Supelec, Plateau du Moulon, 3 rue Joliot-Curie, F-91190 Gif-sur-Yvette, France (e-mail: melanie.mahot@supelec.fr)}, Frédéric Pascal~\IEEEmembership{Member,~IEEE}\thanks{F. Pascal is with SONDRA, Supelec, Plateau du Moulon, 3 rue Joliot-Curie, F-91190 Gif-sur-Yvette, France (e-mail: frederic.pascal@supelec.fr)}, Philippe Forster~\IEEEmembership{Member,~IEEE}\thanks{P. Forster is with SATIE, ENS Cachan, CNRS, UniverSud, 61, Av. du Pdt Wilson, F-94230 Cachan, France (e-mail:philippe.forster@satie.ens-cachan.fr)}, Jean-Philippe Ovarlez~\IEEEmembership{Member,~IEEE}\thanks{J.-P. Ovarlez is with ONERA, DEMR/TSI, Chemin de la Huni\`ere, F-91120 Palaiseau,  (e-mail:ovarlez@onera.fr)}}

\maketitle
\begin{abstract}
In many statistical signal processing applications, the estimation of nuisance parameters and parameters of interest is strongly linked to the resulting performance. Generally, these applications deal with complex data. This paper focuses on covariance matrix estimation problems in non-Gaussian environments and particularly, the $M$-estimators in the context of elliptical distributions. Firstly, this paper extends to the complex case the results of Tyler in \cite{tyler1983robustness}. More precisely, the asymptotic distribution of these estimators as well as the asymptotic distribution of any homogeneous function of degree 0 of the M-estimates are derived.  On the other hand, we show the improvement of such results on two applications: DOA (directions of arrival) estimation using the MUSIC (MUltiple SIgnal Classification) algorithm and adaptive radar detection based on the ANMF (Adaptive Normalized Matched Filter) test. 

\end{abstract}
\begin{keywords}
Covariance matrix estimation, robust estimation, elliptical distributions, Complex $M$-estimators.
\end{keywords}

\section{Introduction}
\label{sec:intro}
Many signal processing applications require the knowledge of the data covariance matrix. The most often used estimator is the well-known Sample Covariance Matrix (SCM) which is the Maximum Likelihood (ML) estimator for Gaussian data. However, the SCM suffers from major drawbacks. When the data turn out to be non-Gaussian, as for instance in adaptive radar and sonar processing \cite{Kay98}, the performance involved by the SCM can be strongly degraded. Indeed, this is the case in impulsive noise contexts and in the presence of outliers as shown in \cite{Mahot10}. To overcome these problems, there has been an intense research activity in robust estimation theory in the statistical community these last decades \cite{huber2009robust, Hamp86, Maronna06}. Among several solutions, the so-called $M$-estimators originally introduced by Huber \cite{huber1964robust} and investigated in the seminal work of Maronna \cite{Maronna76}, have imposed themselves as an appealing alternative to the classical SCM. They have been introduced within the framework of elliptical distributions.  Elliptical distributions, originally introduced by Kelker in \cite{kelker1970distribution}, encompass a large number of well-known distributions as for instance the Gaussian distribution, or the multivariate Student (or $t$) distribution. They may also be used to model heavy tailed distributions by means of the K-distribution, as may be met for instance in adaptive radar with impulsive clutter \cite{Watts85, conte1991radar, Gini98}. $M$-estimators of the covariance matrix are however seldom used in the signal processing community. Only a limited case, the Tyler's estimator \cite{tyler1987distribution} also called the Fixed Point Estimator \cite{Pascal07} has been widely used as an alternative to the SCM for radar applications. Concerning the $M$-estimators, notable exceptions are the recent papers by Ollila \cite{ollila2012complex, ollila2009influence, ollila2003robust, Ollila2003, ollila2003ieee} who advocates their use in several applications such as array processing. The $M$-estimators have also been studied in the case of large datasets, where the dimension of the data is of the same order as the dimension of the sample \cite{couillet2012robust}.

One possible reason for this lack of interest is that their statistical properties are not well-known in the signal processing community, as opposed to the Wishart distribution of the SCM in the Gaussian context. They have been studied by Tyler \cite{tyler1982radial} in the real case. However, in signal processing applications, data are usually complex and the purpose of this paper is to derive the asymptotic distribution of complex $M$-estimators in the framework of elliptically distributed data. This result is also provided in \cite{ollila2012complex} but without proof. We will also extend to the complex case, a property initially derived by Tyler in \cite{tyler1983robustness}: we show that in the complex elliptical distributions context, the asymptotic distribution of any positive homogeneous functional of degree $0$ of estimates such as $M$-estimates and the SCM, is the same up to a scale factor. This result, useful for applications, extends the one proposed in \cite{ollila2012complex}. 
Thus, for a Gaussian context and for signal processing applications which only need the covariance matrix up to a scale factor, for example Direction-of-Arrival (DOA) estimation or adaptive radar detection, the parameter estimated has the same mean square error when estimated with the SCM or with an $M$-estimator with a few more data (depending on $\sigma_1$). Moreover, when the context is non-Gaussian or contains outliers, the performance obtained with $M$-estimators is scarcely influenced while it is unreliable and possibly completely damaged with the SCM as shown for instance in \cite{Mahot10}.
We illustrate this effect using the MUSIC method and the Adaptive Normalized Matched Filter (ANMF) test introduced by Kraut and Scharf \cite{Kraut01,kraut1999cfar}. It is also illustrated by Ollila in \cite{ollila2009influence}, for MVDR beamforming.\\
This paper is organized as follows. Section \ref{sec:stat} introduces the required background and Section \ref{sec:mest} the known properties of real $M$-estimators. Then Section \ref{sec:main} provides our contribution about the estimators asymptotic distribution. Eventually, in Section \ref{sec:simu}, simulations validate the theoretical analysis and  Section \ref{sec:conclu} concludes this work.\\
Vectors (resp. matrices) are denoted by bold-faced lowercase letters (resp. uppercase letters). $^\ast$, $^T$ and $^H$ respectively represent the conjugate, the transpose and the Hermitian operator. $\sim$ means "distributed as", $\overset{d}{=}$ stands for "shares the same distribution as", $\overset{d}{\rightarrow}$ denotes convergence in distribution and $\otimes$ denotes the Kronecker product. $\vec$ is the operator which transforms a matrix $m\times n$ into a vector of lenth $mn$, concatenating its $n$ columns into a single column. Moreover, $\Ig_m$ is the $m\times m$ identity matrix, $\mathbf{0}_{m,p}$ the $m\times p$ matrix of zeros, $\Jg_{m^2}=\disp\sum_{i}^m\Jg_{ii}\otimes\Jg_{ii}$ where $\Jg_{ii}$ is the $m\times m$ matrix with a one in the $(i,i)$ position and zeros elsewhere and $\Kg$ is the commutation matrix which transforms $\vec(\Ag)$ into $\vec(\Ag^T)$. Eventually, $Im(\yg)$ represents the imaginary part of the complex vector $\yg$ and $Re(\yg)$ its real part.

\section{Background}\label{sec:stat}


\subsection{Elliptical symmetric distribution}

Let $\zg$ be a $m$-dimensional real (resp. complex circular) random  vector. The vector $\zg$ has a real (resp. complex) elliptical symmetric distribution if its probability density function (PDF) can be written as
\begin{equation}
\begin{array}{l}
g_{\zg}(\zg)=|\Lg |^{-1/2}h_\zg((\zg-\mug)^T\Lg^{-1}(\zg-\mug)),\\\text{ in the real case,}\label{ell-dist-cplx}\\
g_{\zg}(\zg)=|\Lg |^{-1}h_\zg((\zg-\mug)^H\Lg^{-1}(\zg-\mug)),\\\text{ in the complex case,}
\end{array}
\end{equation}
where $h_\zg : [0, \infty)\rightarrow[0,\infty)$ is any function such that (\ref{ell-dist-cplx}) defines a PDF, $\mug$ is the statistical mean and $\Lg$ is a scatter matrix. The scatter matrix $\Lg$ reflects the structure of the covariance matrix of $\zg$, i.e. the covariance matrix is equal to $\Lg$ up to a scale factor. This real (resp. complex) elliptically symmetric distribution will be denoted by $\E(\mug,\Lg,h_{\zg})$ (resp. $\CE(\mug,\Lg,h_{\zg})$). One can notice that the Gaussian distribution is a particular case of elliptical distributions. A survey on complex elliptical distributions can be found in \cite{ollila2012complex}.

In this paper, we will assume that $\mug=\mathbf{0}_{m,1}$. Without loss of generality, the scatter matrix will be taken to be equal to the covariance matrix when the latter exists. Indeed, when the second moment of the distribution is finite, function $h_\zg$ in (\ref{ell-dist-cplx}) can always be defined such that this equality holds. If the distribution of the data has a none finite second-order moment, then we will simply consider the scatter matrix estimator.

\subsection{Generalized Complex Normal distribution}
As written before, the Gaussian distribution is a particular case of elliptical symmetric distributions. However, in the complex framework, it is true only for circular Gaussian random vectors. We now present the generalization of this distribution as presented by Van den Bos in \cite{van1995multivariate}.

Let $\zg=\xg+j\yg$ be a $m$-dimensional complex random  vector. The vector $\zg$ is said to have a generalized complex normal distribution if and only if $\vg=(\xg^T,\yg^T)^T \in \RR^{2m}$ has a normal distribution. This generalized complex normal distribution will be denoted by $\GCN(\mug,\Sig,\Omg)$ where $\mug$ is the mean, $\Sig=E[(\zg-\mug)(\zg-\mug)^H]$ the covariance matrix, and $\Omg=E[(\zg-\mug)(\zg-\mug)^T]$ the pseudo-covariance matrix.

\subsection{$M$-estimators of the scatter matrix}
\label{sect:Mestim}

Let $(\zg_1, ... ,\zg_N)$ be an $N$-sample of $m$-dimensional real (resp. complex circular) independent vectors with $\zg_i \sim \E(\mathbf{0}_{m,1}, \Lg,h_{\zg})$ (resp. $\zg_i \sim \CE(\mathbf{0}_{m,1}, \Lg,h_{\zg})$), $i=1,...,N$. The real (resp. complex) $M$-estimator of $\Lg$ is defined as the solution of the following equation
\begin{equation}\label{eq:VN}
\Mc=\frac{1}{N}\sum_{n=1}^N u\be\zg_n'\Mc^{-1}\zg_n\en\zg_n\zg_n'.
\end{equation}
where the symbol $'$ stands for $^T$ in the real case and for $^H$ in the complex one.

$M$-estimators have first been studied in the real case, defined as solution of (\ref{eq:VN}) with real samples. Existence and uniqueness of the solution of (\ref{eq:VN}) has been shown in the real case, provided function $u$ satisfies a set of general assumptions stated by Maronna in \cite{Maronna76}. These conditions have been extended to the complex case by Ollila in \cite{ollila2003robust}. They are recalled here below in the case where $\mug=\mathbf{0}_{m,1}$:
\begin{itemize} \item[-] $u$ is non-negative, non increasing, and continuous on $[0,\infty)$.
\item[-] Let $\psi(s)=s\, u(s)$ and $K=\sup_{s\geq0}\psi(s)$.  $m<K<\infty$, $\psi$ is increasing, and strictly increasing on the interval where $\psi<K$.
\item[-] Let $P_N(.)$ denote the empirical distribution of $(\zg_1, ... ,\zg_N)$. There exists $a > 0$ such that for every hyperplane $S$, $\dim(S) \leq m-1$, $P_N(S)\leq 1-\frac{m}{K}-a$.
This assumption can be strongly relaxed as shown in \cite{tyler1988some, Kent91}. 
\end{itemize}
Let us now consider the following equation, which is roughly speaking the limit of (\ref{eq:VN}) when $N$ tends to infinity:
\begin{equation}\label{eq:V}
\Mg=E\left[u(\zg'\Mg^{-1}\zg)\,\zg\zg'\right],
\end{equation}
where $\zg\sim \E(\mathbf{0}_{m,1},\Lg,h_{\zg})$ (resp. $\CE(\mathbf{0}_{m,1}, \Lg, h_{\zg})$) and where the symbol $'$ stands for $^T$ in the real case and for $^H$ in the complex one.

Then, under the above conditions, it has been shown for the real case in \cite{Kent91, Maronna76} that:
\begin{itemize} \item[-] Equation (\ref{eq:V}) (resp. (\ref{eq:VN})) admits a unique solution $\Mg$ (resp. $\Mc$) and
\begin{equation}\Mg=\sigma^{-1}\boldsymbol{\Lambda}, \label{eq:VsL}\end{equation}
where $\sigma$ is  the solution of $E[\psi(\sigma |\tg|^2)]=m$, where $\tg \sim \E (\mathbf{0}_{m,1},\Ig_m,h_{\zg})$, see e.g. \cite{Maronna06}  (resp. $\tg \sim\CE(\mathbf{0}_{m,1}, \Ig_m, h_{\zg})$).
\item[-] A simple iterative procedure provides $\Mc$.
\item[-] $\Mc$ is a consistent estimate of $\Mg$. 
\end{itemize}
The extension to the complex case of previous results has been done in \cite{ollila2012complex}.
\subsection{Wishart distribution}
	
The real (resp.complex) Wishart distribution $\WW(N,\boldsymbol{\Lambda})$ (resp. $\CW(N,\boldsymbol{\Lambda})$) is the distribution of $\disp\sum_{n=1}^N \zg_n\zg_n'$, where $\zg_n$ are real (resp. complex circular), independent identically distributed (i.i.d), Gaussian with zero mean and covariance matrix $\Lg$. Let $ \disp\Wg_N = N^{-1} \sum_{n=1}^N \zg_n
\zg_n' $ be the related SCM which will be also referred to, as a Wishart matrix. The asymptotic distribution of the Wishart matrix $\Wg_N$ is (see e.g. \cite{bilodeau1999theory})
\begin{equation}
\begin{array}{l}
\sqrt{N}\vec(\Wg_N-\Lg)\overset{d}{\longrightarrow} \NN \be \mathbf{0}_{m^2,1},(\Lg\otimes\Lg)(\Ig_{m^2}+\Kg)\en \\\text{ in the real case,}\\
\noindent\sqrt{N}\vec(\Wg_N-\Lg)\overset{d}{\longrightarrow}\GCN \be \mathbf{0}_{m^2,1},\Lg^T\otimes\Lg,(\Lg^T\otimes\Lg)\Kg\en\\\text{ in the complex case.}
\label{eq:Westim}
\end{array}
\end{equation}

We now introduce real $M$-estimators asymptotic properties since they are used as a basis for the extension to the complex case.
\section{Real $M$-estimators properties}
\label{sec:mest}

\subsection{Asymptotic distribution of the real $M$-estimators}
Let $\Mc$ be a real $M$-estimator following Maronnas's conditions \cite{Maronna76}, recalled in section \ref{sect:Mestim}. The asymptotic distribution of $\Mc$ is given by Tyler in \cite{tyler1982radial}:
\begin{equation}
\sqrt{N}\vec(\Mc-\Mg)\overset{d}{\longrightarrow}\NN \be\mathbf{0}_{m,1},\Pig\en,
\label{eq:var_reel}
\end{equation}
where $\Pig=\sigma_1(\Ig_{m^2}+\Kg)(\Mg \otimes \Mg)+\sigma_2\vec(\Mg)\vec(\Mg)^T$, $\sigma_{1}$ and $\sigma_{2}$ are given by (\cite{tyler1982radial}):

\begin{equation}
\label{eq:sig12}
\ba
\begin{array}{l}
\sigma_1=a_1(m+2)^2(2a_2+m)^{-2},\\
\sigma_2=a_2^{-2}\left[(a_1-1)-\cfrac{2a_1(a_2-1)}{(2a_2+m)^{2}}\left[m+(m+4)a_2\right]\right],
\end{array}
\right.
\end{equation}
with
\begin{equation*}
\ba
\begin{array}{l}
a_1=[m(m+2)]^{-1}\,E\left[ \psi^2(\sigma |\tg|^2)\right],\\
a_2=m^{-1}\,E[\sigma |\tg|^2\psi'(\sigma |\tg|^2)],\\
\end{array}
\right.
\end{equation*}
and $\sigma$ is given in equation (\ref{eq:VsL}).


\subsection{An important property of real $M$-estimators}
\label{subsect:reel}
Let $\Vg$  be a  fixed symmetric positive-definite matrix and $\Vg_N$ a sequence of symmetric positive definite random matrices of order $m$ which satisfies 
\begin{equation}
\sqrt{N}\vec(\Vg_N-\Vg)\overset{d}{\longrightarrow}\NN \be\mathbf{0}_{m^2,1},\Sg\en,
\label{eq:C1}
\end{equation}
where $\Sg=\mu_1(\Ig_{m^2}+\Kg)(\Vg \otimes \Vg)+\mu_2\vec(\Vg)\vec(\Vg)^T$, $\mu_1$ and $\mu_2$ are any real numbers such that $\Sg$ is a positive matrix.


Let $H(\Vg)$ be a $r$-dimensional multivariate function on the set of $m\times m$ positive-definite symmetric matrices with continuous first partial derivatives and such as $H(\Vg)=H(\alpha \Vg)$ for all $\alpha>0$. Then under conditions (\ref{eq:C1}), Tyler has shown in \cite{tyler1983robustness} Theorem 1, that
\begin{multline}
\sqrt{N}\be H(\Vg_N)-H(\Vg) \en \overset{d}{\longrightarrow}\\ \NN \be \mathbf{0}_{r,1},2 \mu_1 H'(\Vg)(\Vg\otimes\Vg)H'(\Vg)^T\en,
\end{multline}
where $H'(\Vg)= \cfrac{1}{2}\left(\cfrac{dH(\Vg)}{d \vec(\Vg)}\right)\left(\Ig_{m^2}+\Jg_{m^2}\right)$.

By noticing that, in a Gaussian context the SCM satisfies $\mu_1=1$ and $\mu_2=0$ (equation (\ref{eq:Westim})) and that real $M$-estimators verify $\mu_1=\sigma_1$ and $\mu_2=\sigma_2$ (equation (\ref{eq:var_reel})), Tyler's theorem shows that $\sqrt{N}(H(\Wg_N)-H(\Lg))$ and $\sqrt{N/\sigma_1}(H(\Mc)-H(\Lg))$ share the same asymptotic distribution.\\

In practice, $H(.)$ may be a function which associates a parameter of interest to a covariance matrix. This scale-invariant property has also been exploited in \cite{ollila2009influnece}. The concerned signal processing applications are those in which multiplying the covariance matrix by a positive scalar does not change the result. This is the case for instance for the MUSIC method in which the estimated parameters are the signals DOA. An other example is given by adaptive radar processing in which the parameter is the ANMF test statistic \cite{Kraut01,kraut1999cfar}. Here, $H$ is defined by: $\Mc\overset{H}{\rightarrow}H(\Mc)=\cfrac{|\pg^H\Mc^{-1}\yg|^2}{(\pg^H\Mc^{-1}\yg)(\yg^H\Mc^{-1}\yg)}$. \\

The aim of the next section is to extend those results to the complex case, which is the frequently met framework for most signal processing applications.

\section{Main results in complex case}
 \label{sec:main}
\subsection{Asymptotic distribution of the complex $M$-estimator}
Let $(\zg_1, ... ,\zg_N)$ be an $N$-sample of $m$-dimensional complex independent vectors with $\zg_n \sim \CE(\mathbf{0}_{m,1}, \Lg,h_{\zg})$, $n=1,...,N$. We consider the complex $M$-estimator $\Mc_C$ which verifies equation (\ref{eq:VN}), and we denote $\Mg_C$ the solution of (\ref{eq:V}).

\begin{Thm}
\label{sect:theorem}
The asymptotic distribution of $\Mc_C$ is given by
\begin{equation}
\sqrt{N}\vec(\Mc_C-\Mg_C)\overset{d}{\longrightarrow}\GCN \be\mathbf{0}_{m^2,1},\Sig,\Omg\en,
\label{eq:Mestim}
\end{equation}
where $\Sig$ and $\Omg$ are defined by
\begin{equation}
\begin{array}{ll}
&\Sig=\sigma_1 \Mg_C^{T}\otimes\Mg_C+\sigma_2\vec(\Mg_C)\vec(\Mg_C)^H,\\
&\Omg=\sigma_1 (\Mg_C^{T}\otimes\Mg_C)\Kg+\sigma_2\vec(\Mg_C)\vec(\Mg_C)^T,
\end{array}\label{eq:varcplx}
\end{equation}
with
\begin{equation}
\label{eq:sig12}
\left\lbrace
\begin{array}{l}
\sigma_1=a_1(m+1)^2(a_2+m)^{-2},\\
\sigma_2=a_2^{-2}\left[(a_1-1)-\cfrac{2a_1(a_2-1)}{(2a_2+2m)^{2}}\left[2m+(2m+4)a_2\right]\right],
\end{array}\right.
\end{equation}
and
\begin{equation*}
\left\lbrace
\begin{array}{l}
a_1=[m(m+1)]^{-1}\,E\left[ \psi^2(\sigma |\tg|^2)\right],\\
a_2=m^{-1}\,E[\sigma |\tg|^2\psi'(\sigma |\tg|^2)],\\
\end{array}\right.
\end{equation*}
where $\sigma$ is  the solution of $E[\psi(\sigma |\tg|^2)]=m$, where $\tg \sim \CE (\mathbf{0}_{m,1},\Ig_m,h_{\zg})$.
\end{Thm}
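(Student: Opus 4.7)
The plan is to mirror Tyler's real-case argument \cite{tyler1982radial}: linearise the defining fixed-point equation (\ref{eq:VN}) around the deterministic limit $\Mg_C$ solving (\ref{eq:V}), and then apply the multivariate central limit theorem. The new burden compared with the real case is that the complex $M$-estimator is generally non-circular, so both the covariance $\Sig$ and the pseudo-covariance $\Omg$ must be carried through the computation in parallel. By the affine equivariance $\Mc_C(\Ag\zg_n)=\Ag\,\Mc_C(\zg_n)\,\Ag^H$ together with (\ref{eq:VsL}), I would first reduce to the case $\Mg_C=\Ig_m$, in which the $\zg_n$ are i.i.d.\ $\CE(\mathbf{0}_{m,1},\sigma^{-1}\Ig_m,h_\zg)$; the Kronecker structure of (\ref{eq:varcplx}) for general $\Lg$ is then recovered at the end by pull-back.

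Setting $s_n=\zg_n^H\zg_n$ and $\Mc_C^{-1}=\Ig_m-(\Mc_C-\Ig_m)+O_p(\|\Mc_C-\Ig_m\|^2)$, a first-order Taylor expansion of $u$ inside (\ref{eq:VN}) yields an identity of the form
\[
\Mc_C-\Ig_m=(\Ag_N-\Ig_m)-\mathcal{B}_N(\Mc_C-\Ig_m)+o_p(N^{-1/2}),
\]
where $\Ag_N=\frac{1}{N}\sum_n u(s_n)\,\zg_n\zg_n^H$ and $\mathcal{B}_N$ is a linear operator on Hermitian matrices. By the law of large numbers and the phase-isotropy of $\zg_n$, $\mathcal{B}_N$ converges to a deterministic $\mathcal{B}(\Xg)=\alpha\Xg+\beta(\Tr\Xg)\Ig_m$, with $\alpha,\beta$ explicitly determined by $a_2$. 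This is where the combinatorial factor $m(m+1)$ replaces Tyler's $m(m+2)$: it is the complex isotropic moment $E[|\tg_i|^2|\tg_j|^2]$ rather than its real analogue that is used. The operator $\Ig+\mathcal{B}$ inherits the same two-parameter structure on the Hermitian space, so its inverse is again of the form $\Xg\mapsto c_1\Xg+c_2(\Tr\Xg)\Ig_m$.

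It then remains to apply the multivariate CLT to $\sqrt{N}\,\vec(\Ag_N-\Ig_m)$ and to push the limit through $(\Ig+\mathcal{B})^{-1}$. The asymptotic second-order structure is controlled by the two fourth-moment matrices $E[u^2(s)\vec(\zg\zg^H)\vec(\zg\zg^H)^H]$ and $E[u^2(s)\vec(\zg\zg^H)\vec(\zg\zg^H)^T]$; the second one is the source of the commutation matrix $\Kg$ appearing in $\Omg$, because in the pseudo-moment the entries of $\vec(\zg\zg^H)^T$ pair indices in the transposed pattern, and isotropy of the phases of $\tg$ annihilates every cross term except the one carrying $\Kg$. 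Evaluating these moments and composing with $(\Ig+\mathcal{B})^{-1}$ yields $\Sig$ and $\Omg$ in the announced form, with the scalars $\sigma_1,\sigma_2$ of (\ref{eq:sig12}) recovered by routine algebra.

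The main obstacle is precisely this simultaneous tracking of $\Sig$ and $\Omg$, which has no counterpart in Tyler's real proof: every intermediate quantity (the fourth moment of $\Ag_N$, the Jacobian $\mathcal{B}$, the inverse operator) has to be computed twice, once in the Hermitian pairing and once in the bilinear pairing. A convenient sanity check at each step is provided by the Hermitian symmetry of $\Mc_C$, which forces the rigid identity $\Omg=\Sig\,\Kg$; one verifies without difficulty that the candidate $(\Sig,\Omg)$ in (\ref{eq:varcplx}) satisfies it, using $\Kg(\Mg_C^T\otimes\Mg_C)=(\Mg_C\otimes\Mg_C^T)\Kg$ and the Hermitian identity $\Kg\vec(\Mg_C)^{*}=\vec(\Mg_C)$, so that this constraint can be invoked throughout as a consistency check on the pseudo-covariance calculation.
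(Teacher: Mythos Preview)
Your proposal is correct, but it follows a genuinely different route from the paper's own proof. You work \emph{directly in the complex domain}: reduce to $\Mg_C=\Ig_m$ by equivariance, linearise the fixed-point equation there, compute the complex isotropic fourth moments (this is where your $m(m+1)$ appears), and invert the two-parameter Jacobian $\Ig+\mathcal{B}$ on Hermitian matrices. The paper instead \emph{reduces to the real case}: it maps the complex $m\times m$ problem to a real $2m\times 2m$ one via the isomorphism $f(\Ag)=\tfrac12\begin{pmatrix}Re(\Ag)&-Im(\Ag)\\ Im(\Ag)&Re(\Ag)\end{pmatrix}$, shows (Lemma~\ref{lem1}) that the resulting $\Mc_R$ has the same asymptotic law as the average of two coupled real $M$-estimators $\tfrac12(\Mc_u+\Mc_v)$ built from $\ug_n=(Re\,\zg_n^T,Im\,\zg_n^T)^T$ and $\vg_n=\Trg\ug_n$, and then invokes Tyler's real theorem \cite{tyler1982radial} as a black box with dimension $2m$. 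The coefficients $\sigma_1,\sigma_2$ then drop out by substituting $2m$ for $m$ in the real formulas and simplifying via $\psi_r(s)=2\psi(s/2)$ and $|\sg|^2/2\overset{d}{=}|\tg|^2$.

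What each approach buys: the paper's reduction avoids recomputing any isotropic moments or inverting $\Ig+\mathcal{B}$ from scratch, since all of that is encapsulated in Tyler's result; the price is the intermediate Lemma~\ref{lem1}, whose proof is itself a parallel linearisation. Your direct route is more self-contained and makes the origin of the complex constants transparent, at the cost of redoing Tyler's computation in the complex setting. Both proofs close with the same shortcut you identified, namely $\Omg=\Sig\,\Kg$ from the Hermitian symmetry of $\Mc_C$.
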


This result is also given in \cite{ollila2012complex} with others assumptions but without proof.

\subsection{Proof of Theorem \ref{sect:theorem}}
\subsubsection{Notations}
Let us first introduce the following linear one-to-one transformation of a Hermitian $m\times m$ matrix $\Ag$ into a real symmetric $2m\times 2m$ matrix:
\begin{equation}
f(\Ag)=\cfrac{1}{2}\begin{pmatrix}
Re(\Ag) & -Im(\Ag)\\
Im(\Ag) & Re(\Ag)
\end{pmatrix}
.
\end{equation}
The inverse transformation is given by $\Ag=\gg^H f(\Ag) \gg$ where $\gg^T= (\Ig_m\,,\, -j\Ig_m)$. Function $f$ has some useful properties. Let $\ug_n$ and $\vg_n$ be the following $2m$ vectors:
\begin{equation}
\begin{array}{l}
\ug_n=(Re(\zg_n)^T,Im(\zg_n)^T)^T\\
\vg_n=(-Im(\zg_n)^T,Re(\zg_n)^T)^T,
\end{array}
\end{equation}
which are both distributed according to $\E(\mathbf{0}_{2m,1}, \Lg_R,g_{\zg})$ where $\Lg_R= f(\Lg)$.

Then, it may be shown that
\begin{equation}
\nonumber
\begin{array}{l}
f(\Ag^{-1})=\cfrac{1}{4}\,f(\Ag)^{-1}\\
f(\zg_n\zg_n^H)=\cfrac{1}{2} \, (\ug_n\ug_n^T+\vg_n\vg_n^T)\\
\zg_n^H\Ag^{-1}\zg_n=\cfrac{1}{2}\,\ug_n^T f(\Ag)^{-1}\ug_n=\cfrac{1}{2}\,\vg_n^T f(\Ag)^{-1}\vg_n
\end{array}
\end{equation}
Let $\Trg = \begin{pmatrix}
\mathbf{0}_{m,m} & -\Ig_m \\
\Ig_m& \mathbf{0}_{m,m}
\end{pmatrix}
$, one has $\vg_n=\Trg\ug_n$ and $\ug_n=\Trg^T\vg_n.$

Let us also introduce
\begin{equation}
\begin{array}{l}
\Mc_R=f(\Mc_C),\\
\Mg_R=f(\Mg_C).
\end{array}
\label{eq:MR}
\end{equation}

It is easy to show that equation (\ref{eq:VN}) defining the complex $M$-estimator $\Mc_C$, is equivalent to the following equation involving $\Mc_R$:
\begin{multline}
\Mc_R=\cfrac{1}{2N}\disp\sum_{n=1}^N\be u_r\be\ug_n^T\Mc_R^{-1}\ug_n\en\ug_n\ug_n^T\right.+\\ \left. u_r\be\vg_n^T\Mc_R^{-1}\vg_n\en\vg_n\vg_n^T\en,
\label{eq:MR2}
\end{multline}
where $u_r(s)=u(s/2)$. Roughly speaking, equation (\ref{eq:MR2}) defines a real $M$-estimator involving the $2N$ real samples $\ug_n$ and $\vg_n$.

Let $\Mc_u$ and $\Mc_v$ be respectively the two $M$-estimators defined by
\begin{equation}
\begin{array}{l}
\Mc_u=\cfrac{1}{N}\disp\sum_{n=1}^N u_r\be\ug_n^T\Mc_u^{-1}\ug_n\en\ug_n\ug_n^T,\\
\Mc_v=\cfrac{1}{N}\disp\sum_{n=1}^N u_r\be\vg_n^T\Mc_v^{-1}\vg_n\en\vg_n\vg_n^T,
\label{eq:Mcu}
\end{array}
\end{equation}
and let $\Mg_u$, $\Mg_v$ be the associated solutions of
\begin{equation}
\begin{array}{l}
\Mg_u=E\left[ u_r\be\ug_1^T\Mg_u^{-1}\ug_1\en\ug_1\ug_1^T\right],\\
\Mg_v=E\left[ u_r\be\vg_1^T\Mg_v^{-1}\vg_1\en\vg_1\vg_1^T\right].
\end{array}
\nonumber
\end{equation}
By applying $\Trg$ on equation (\ref{eq:Mcu}), one obtains 
\begin{equation}
\Mc_v=\Trg\Mc_u\Trg^T.
\label{eq:Mv=TMuT}
\end{equation}
Moreover, since $\vg_n$ has the same distribution as $\ug_n$, 
\begin{equation}
\Mg_u=\Mg_v =\Trg\Mg_u\Trg^T.
\label{eq:Mv=Mu}
\end{equation}

\subsubsection{An intermediate result}
\begin{Lem}\label{lem1}
$\Mc_R$ and $\cfrac{1}{2}(\Mc_u+\Mc_v)$ have the same Gaussian asymptotic distribution.
\end{Lem}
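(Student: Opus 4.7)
The plan is to prove that $\sqrt{N}\bigl(\Mc_R - \tfrac{1}{2}(\Mc_u + \Mc_v)\bigr)$ converges to zero in probability. Since $\Mc_u$ and $\Mc_v$ are each a real $M$-estimator in $\RR^{2m}$ based on a sample from $\E(\mathbf{0}_{2m,1},\Lg_R,g_{\zg})$, Tyler's result (\ref{eq:var_reel}) supplies asymptotic normality for each (and jointly, since $\vg_n = \Trg\ug_n$ is a deterministic linear function of $\ug_n$); Slutsky's theorem will then transfer the Gaussian limit of $\tfrac{1}{2}(\Mc_u+\Mc_v)$ to $\Mc_R$. A preliminary observation drastically simplifies the bookkeeping: taking expectations in (\ref{eq:MR2}) and using that $\ug_n$ and $\vg_n$ share the distribution $\E(\mathbf{0}_{2m,1},\Lg_R,g_{\zg})$, the population equation for $\Mg_R$ collapses to $\Mg_R = E\!\left[u_r(\ug_1^T \Mg_R^{-1}\ug_1)\,\ug_1\ug_1^T\right]$, which is the defining equation for $\Mg_u$ and for $\Mg_v$. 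By uniqueness of the $M$-functional one therefore has $\Mg_R = \Mg_u = \Mg_v$.

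With this common centre in place, I would linearize each of the three fixed-point equations to first order around $\Mg_R$. Writing $\Delta = \Mc - \Mg_R$ for whichever of the three estimators is under consideration, and using $\Mc^{-1} \approx \Mg_R^{-1} - \Mg_R^{-1}\Delta\Mg_R^{-1}$ inside the argument of $u_r$, each equation rearranges into the form
\begin{equation*}
(\mathrm{Id} + \mathcal{L})(\Delta) = S + o_p(N^{-1/2}),
\end{equation*}
where $\mathcal{L}$ is the deterministic linear operator on symmetric $2m\times 2m$ matrices given by $\mathcal{L}(\Delta) = E\!\left[u_r'(\ug_1^T\Mg_R^{-1}\ug_1)\,(\ug_1^T\Mg_R^{-1}\Delta\Mg_R^{-1}\ug_1)\,\ug_1\ug_1^T\right]$ and $S$ is the relevant empirical score. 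Because $\ug_n$ and $\vg_n$ are identically distributed and $\Mg_u = \Mg_v = \Mg_R$, the operator $\mathrm{Id}+\mathcal{L}$ is literally the same in all three linearizations, while the scores are
\begin{equation*}
\begin{array}{l}
S_u = \tfrac{1}{N}\sum_{n=1}^N\!\bigl[u_r(\ug_n^T\Mg_R^{-1}\ug_n)\ug_n\ug_n^T - \Mg_R\bigr],\\[2pt]
S_v = \tfrac{1}{N}\sum_{n=1}^N\!\bigl[u_r(\vg_n^T\Mg_R^{-1}\vg_n)\vg_n\vg_n^T - \Mg_R\bigr],\\[2pt]
S_R = \tfrac{1}{2}(S_u + S_v),
\end{array}
\end{equation*}
the last identity being immediate from comparing (\ref{eq:MR2}) with (\ref{eq:Mcu}). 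Averaging the linearized equations for $\Mc_u$ and $\Mc_v$ and subtracting the one for $\Mc_R$ then yields $(\mathrm{Id}+\mathcal{L})\bigl(\Mc_R - \tfrac{1}{2}(\Mc_u+\Mc_v)\bigr) = o_p(N^{-1/2})$, which delivers the desired convergence as soon as $\mathrm{Id}+\mathcal{L}$ is invertible.

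Two technical points will absorb most of the work. First, the Taylor remainders must be controlled in a $\sqrt{N}$-neighbourhood of $\Mg_R$: consistency of each estimator (Maronna's theorem, extended to the complex case as recalled in Section \ref{sect:Mestim}) together with the boundedness and monotonicity assumptions on $\psi(s)=s\,u(s)$ yields $\Delta = O_p(N^{-1/2})$ and a quadratic remainder of order $o_p(N^{-1/2})$. Second, the invertibility of $\mathrm{Id}+\mathcal{L}$ on symmetric $2m\times 2m$ matrices is the infinitesimal counterpart of the uniqueness of the population fixed point; it should follow from strict monotonicity of $\psi$ via an implicit-function argument, exactly as in Tyler's derivation of (\ref{eq:var_reel}). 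I expect this invertibility step to be the genuine obstacle, since it is the only part that does not reduce to purely formal manipulation of the defining equations.
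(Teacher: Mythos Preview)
Your proposal is correct and follows essentially the same route as the paper: linearize each of the three fixed-point equations to first order around the common population target $\Mg_R=\Mg_u=\Mg_v$ and match leading terms, using that the score for $\Mc_R$ is the average of the scores for $\Mc_u$ and $\Mc_v$ while the governing linear operator is the same in all three cases. The paper carries this out after a whitening step and by computing each asymptotic limit separately (matching them via the commutation of $\Trg\otimes\Trg$ with the empirical operator $\alphagt_N$ and with $\Mg_u^{1/2}\otimes\Mg_u^{1/2}$), whereas you subtract the linearized equations directly and absorb the empirical-versus-population operator discrepancy into the $o_p(N^{-1/2})$ remainder; the underlying argument, and the two technical ingredients you flag (remainder control and invertibility of $\mathrm{Id}+\mathcal{L}$), are the same.
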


\begin{proof}
See appendix \ref{appendix}.
\end{proof}

\subsubsection{End of proof of theorem \ref{sect:theorem}}

By using equation (\ref{eq:MR}) and the inverse of $f$, one obtains $\Mc_C=\gg^H\Mc_R\gg$. From the Lemma \ref{lem1} $\vec(\Mc_R)$. has a normal distribution. It follows that $\vec(\Mc_C)$ has a generalized complex normal distribution.

Given the property $\vec(\Ag\Bg\Cg)=(\Cg^T\otimes\Ag)\vec(\Bg)$ where $\Ag$, $\Bg$, $\Cg$ are 3 matrices, and using the fact that $\Mc_C=\gg^H\Mc_R\gg$, one has 
\begin{equation}
\begin{array}{lll}
\Sig &=& N E[\vec(\Mc_C-\Mg_C)\vec(\Mc_C-\Mg_C)^H]\\
&=&(\gg^T\otimes\gg^H)\, E[N\vec(\Mc_R-\Mg_R)\vec(\Mc_R-\Mg_R)^H]\\&&(\gg^T\otimes\gg^H)^H.
\label{eq:Sig1}
\end{array}
\end{equation}

Using lemma \ref{lem1}, and the equalities (\ref{eq:Mv=TMuT}) and (\ref{eq:Mv=Mu}), equation (\ref{eq:Sig1}) gives
\begin{equation}
\begin{array}{lll}
\Sig &= &(\gg^T\otimes\gg^H)N E\left[\vec\be\cfrac{1}{2}(\Mc_u+\Trg\Mc_u\Trg^T)-\Mg_u\en \right. \\ &&\left.\vec\be\cfrac{1}{2}(\Mc_u+\Trg\Mc_u\Trg^T)-\Mg_u\en^H\right](\gg^T\otimes\gg^H)^H\\
&=& (\gg^T\otimes\gg^H)(\Ig_{4m^2}+\Trg\otimes\Trg)\Pig_u (\Ig_{4m^2}+\Trg\otimes\Trg)^H\\&&(\gg^T\otimes\gg^H)^H\\
&=&(\gg^T\otimes\gg^H)\, \Pig_u \,(\gg^T\otimes\gg^H)^H,
\label{eq:Sig2}
\end{array}
\end{equation}
by using $\Pig_u$ the asymptotic covariance of $\Mg_u$ and the equalities $\gg^T\Trg=-j\gg^T$ and $\gg^H\Trg=j\gg^H$.

Using the expression given in (\ref{eq:var_reel}), and taking into account that the $\ug_n$ are $2m$-dimensional vectors, we have 
\begin{equation}
\Pig_u=\sigma_1(\Ig_{4m^2}+\Kg)(\Mg_u \otimes \Mg_u)+\sigma_2\vec(\Mg_u)\vec(\Mg_u)^T
 \end{equation}
 where $\sigma_1$ and $\sigma_2$ will be specified later.
  
A consequence of lemma \ref{lem1} is that $\Mg_R=\Mg_u$. Indeed, from the definition of $\Mc_R$, one has 
\begin{equation*}
\Mg_R = \cfrac{1}{2} \be E\left[u_r\be\ug^T\Mg_R^{-1}\ug\en\ug\ug^T\right]+ E\left[u_r\be\vg^T\Mg_R^{-1}\vg\en\vg\vg^T\right]\en
\end{equation*}
The first term of the right hand side is the definition of $\Mg_u$ while the second one is the one of $\Mg_v$. Then, as $\Mg_u=\Mg_v$, one has $\Mg_R=\Mg_u$.

Therefore, $\Mg_C=\gg^H\Mg_u\gg$, which leads to
\begin{equation}
\Sig=\sigma_1(\Mg_C^T \otimes \Mg_C)+\sigma_2\vec(\Mg_C)\vec(\Mg_C)^H.
 \end{equation}
 
 Now let us turn to the $\sigma_1$ and $\sigma_2$ coefficients. Using (\ref{eq:var_reel}), one has
\begin{equation}
\left\lbrace
\begin{array}{l}
\sigma_1=a_1(2m+2)^2(2a_2+2m)^{-2},\\
\sigma_2=a_2^{-2}\left[(a_1-1)-\cfrac{2a_1(a_2-1)}{(2a_2+2m)^{2}}\left[2m+(2m+4)a_2\right]\right],\\
\\
a_1=[2m(2m+2)]^{-1}\,E\left[ \psi_r^2(\sigma |\sg|^2)\right],\\
a_2=(2m)^{-1}\,E[\sigma |\sg|^2\psi_r'(\sigma |\sg|^2)],\\
\end{array}\right.
\label{sig12preuve}
\end{equation}
where $\psi_r(s)=s u_r(s)$, $\sg \sim \E (\mathbf{0}_{2m,1},\Ig_{2m},h_{\zg})$ and $\sigma$ is the solution of 
\begin{equation}
E[\psi_r(\sigma |\sg|^2)]=2m.
\label{eq:sigma_r}
\end{equation}

Since $\psi_r(s)=2\psi(s/2)$, equation (\ref{eq:sigma_r}) is equivalent to
\begin{equation}
E \left[\psi\be\cfrac{\sigma}{2} |\sg|^2\en\right]=m.
\label{eq:sigma_r2}
\end{equation}
Moreover let $\tg \sim \CE (\mathbf{0}_{m,1},\Ig_{m},h_{\zg})$. Then $|\tg|^2$ has the same distribution as $|\sg|^2/2$ so that (\ref{eq:sigma_r}) and (\ref{eq:sigma_r2}) are also equivalent to
\begin{equation}
E\left[\psi(\sigma |\tg|^2)\right]=m
\end{equation}
We finally obtain the expression of $\Sig$.

\paragraph{Asymptotic pseudo-covariance matrix}
$\Omg$ is defined as
\begin{equation}
\Omg = N E[\vec(\Mc_C-\Mg_C)\vec(\Mc_C-\Mg_C)^T]
\end{equation}
Using the commutation matrix $\Kg$, one has
\begin{equation}
\begin{array}{ll}
\Kg\vec(\Mc_C-\Mg_C)&=\vec(\Mc_C^T-\Mg_C^T)\\
&=\vec(\Mc_C-\Mg_C)^\ast
\end{array}
\end{equation}
since $\Mc_C$ is Hermitian. Thus one can write
\begin{equation}
\begin{array}{ll}
\vec(\Mc_C-\Mg_C)^T &=[\vec(\Mc_C-\Mg_C)^\ast]^H,\\
&=\vec(\Mc_C-\Mg_C)^H \Kg,
\end{array}
\end{equation}
where $\Kg^H=\Kg$.

Therefore, $\Omg=\Sig \Kg$, which leads to the result of theorem \ref{sect:theorem} after a few derivations, and concludes the proof.

In the following part, we extend the result of section \ref{subsect:reel} to the complex case.

\subsection{An important property of complex $M$-estimators}
\begin{Thm}\hspace{1cm}
\label{sect:theorem2}

\begin{itemize}
\item Let $\Mg$  be a  fixed Hermitian positive-definite matrix and $\Mc$  a sequence of Hermitian positive-definite random matrix estimates of order $m$ which satisfies
\begin{equation}
\sqrt{N}\be\vec(\Mc-\Mg)\en\overset{d}{\longrightarrow}\GCN \be\mathbf{0}_{m^2,1},\Sig_M, \Omg_M \en,
\end{equation}
with
\begin{equation}
\begin{array}{l}
\Sig_M=\nu_1 \Mg^{T}\otimes\Mg+\nu_2\vec(\Mg)\vec(\Mg)^H,\\
\Omg_M=\nu_1 (\Mg^{T}\otimes\Mg) \Kg+\nu_2\vec(\Mg)\vec(\Mg)^T,
\end{array}
\end{equation}
where $\nu_1$ and $\nu_2$ are any real numbers.\\

\item Let $H(\Mg)=(h_1,..., h_r)^T$ be a $r$-dimensional multivariate function on the set of $m\times m$ complex Hermitian positive-definite matrices, possessing continuous first partial derivatives and such as $H(\Mg)=H(\alpha \Mg)$ for all $\alpha>0$.
\end{itemize}
Then,
\begin{equation}
\sqrt{N}\be H(\Mc)-H(\Mg) \en \overset{d}{\longrightarrow} \GCN \be \mathbf{0}_{r,1},  \Sig_H, \Omg_H\en,
\label{eq:th2}
\end{equation}
where $\Sig_H$ and $\Omg_H$ are defined as
\begin{equation}
\begin{array}{l}
\Sig_H= \nu_1 H'(\Mg)(\Mg^T\otimes\Mg)H'(\Mg)^H,\\
\Omg_H=  \nu_1 H'(\Mg)(\Mg^T\otimes\Mg)\Kg H'(\Mg)^T,
\end{array}
\end{equation}
and $H'(\Mg)=\cfrac{dH(\Mg)}{d \vec(\Mg)}=(h'_{ij})$ with $h'_{ij}=\cfrac{\partial h_i}{\partial m_j}$ where $\vec(\Mg)=(m_i)$.
\end{Thm}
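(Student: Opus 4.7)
The plan is to apply the multivariate delta method to $H$ and then exploit the homogeneity of $H$ of degree $0$ to cancel the contributions involving $\nu_2$. The convergence in distribution of $\sqrt{N}\vec(\Mc-\Mg)$ implies $\Mc \to \Mg$ in probability, so, using the continuity of the first partial derivatives of $H$, a first-order Taylor expansion around $\Mg$ yields
\begin{equation*}
\sqrt{N}\bigl(H(\Mc)-H(\Mg)\bigr) = H'(\Mg)\,\sqrt{N}\vec(\Mc-\Mg) + o_p(1).
\end{equation*}
Since a deterministic linear transformation preserves the generalized complex normal family, namely $\Ag\,\zg \sim \GCN(\mathbf{0},\,\Ag\Sig \Ag^H,\,\Ag\Omg\Ag^T)$ whenever $\zg \sim \GCN(\mathbf{0},\Sig,\Omg)$, Slutsky's lemma gives
\begin{equation*}
\sqrt{N}\bigl(H(\Mc)-H(\Mg)\bigr)\overset{d}{\longrightarrow}\GCN\bigl(\mathbf{0}_{r,1},\,H'(\Mg)\Sig_M H'(\Mg)^H,\,H'(\Mg)\Omg_M H'(\Mg)^T\bigr).
\end{equation*}

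The key step, which produces the simple form announced in the theorem, is the observation that scale invariance forces $H'(\Mg)\vec(\Mg)=\mathbf{0}_{r,1}$. Indeed, differentiating the identity $H(\alpha\Mg)=H(\Mg)$ with respect to $\alpha>0$ and using the chain rule together with $\vec(\alpha\Mg)=\alpha\vec(\Mg)$ gives $H'(\alpha\Mg)\vec(\Mg)=\mathbf{0}_{r,1}$, and evaluating at $\alpha=1$ yields the claim. Consequently
\begin{equation*}
H'(\Mg)\,\vec(\Mg)\vec(\Mg)^H H'(\Mg)^H = \mathbf{0}, \qquad H'(\Mg)\,\vec(\Mg)\vec(\Mg)^T H'(\Mg)^T = \mathbf{0},
\end{equation*}
so the rank-one pieces $\nu_2\vec(\Mg)\vec(\Mg)^H$ in $\Sig_M$ and $\nu_2\vec(\Mg)\vec(\Mg)^T$ in $\Omg_M$ drop out entirely. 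Substituting the remaining $\nu_1$ terms gives exactly $\Sig_H = \nu_1 H'(\Mg)(\Mg^T\otimes\Mg)H'(\Mg)^H$ and $\Omg_H = \nu_1 H'(\Mg)(\Mg^T\otimes\Mg)\Kg H'(\Mg)^T$, which is the stated result.

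The main obstacle is essentially just bookkeeping: one must justify that the delta method remains valid in the present Hermitian/complex setting. This is not a difficulty in essence, because the Hermitian constraint on $\Mc$ is already encoded in the pseudo-covariance $\Omg_M$ via the commutation matrix $\Kg$ (reflecting $\vec(\Mc-\Mg)^* = \Kg\vec(\Mc-\Mg)$), and the whole argument is linear in the ambient space $\mathbb{C}^{m^2}$. Once the identity $H'(\Mg)\vec(\Mg)=\mathbf{0}_{r,1}$ is in hand, the proof reduces to a direct computation of $\Ag\Sig_M\Ag^H$ and $\Ag\Omg_M\Ag^T$ with $\Ag=H'(\Mg)$, which yields the announced expressions.
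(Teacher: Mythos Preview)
Your proposal is correct and follows essentially the same route as the paper: a first-order (delta-method) expansion of $H$ around $\Mg$, combined with the identity $H'(\Mg)\vec(\Mg)=\mathbf{0}_{r,1}$ from degree-$0$ homogeneity to kill the $\nu_2$ terms. Your version is in fact slightly more explicit (you invoke Slutsky, handle the $o_p(1)$ remainder, and derive the orthogonality by differentiating $H(\alpha\Mg)=H(\Mg)$ in $\alpha$), whereas the paper argues the orthogonality geometrically and writes the expansion informally, but the underlying argument is identical.
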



\begin{proof}
One can first notice that $H'(\Mg)\vec(\Mg)=\mathbf{0}_{r,1}$. Indeed, since $H(\Mg)=H(\alpha\Mg)$ for all $\alpha>0$, the subspace generated by the vector $\vec(\Mg)$ is an iso-$H$ region. Therefore, $H'(\Mg)$ which can be seen as a gradient of $H$, is orthogonal to $\vec(\Mg)$.

A first order approximation of $H(\Mc)$ gives
\begin{equation}
H(\Mc)\simeq  H(\Mg)+ H'(\Mg)\vec(\Mc-\Mg),
\end{equation}

Thus one has,
\begin{equation}
\begin{array}{ll}

\Sig_H &= N E\left[\be H(\Mc)-H(\Mg)\en \be H(\Mc)-H(\Mg)\en ^H\right]\\
&=H'(\Mg)E\left[N \vec(\Mc-\Mg)\vec(\Mc-\Mg)^H\right]H'(\Mg)^H\\
&=H'(\Mg)\Sig_M H'(\Mg)^H\\
&=H'(\Mg)\be\nu_1 \Mg^{T}\otimes\Mg+\nu_2\vec(\Mg)\vec(\Mg)^H\en H'(\Mg)^H\\
&=H'(\Mg)\be\nu_1 \Mg^{T}\otimes\Mg\en H'(\Mg)^H.
\end{array}
\end{equation}
The proof is similar for $\Omg_H$.
\end{proof}

Similarly to the real case, when the data have a complex Gaussian distribution, the SCM is a complex Wishart matrix. Moreover, the SCM estimator verifies the conditions of the theorem and its coefficients $(\mu_1,\mu_2)$ are equal to $(1,0)$. Complex normalized $M$-estimators also verify the conditions of the theorem with $(\mu_1,\mu_2)=(\sigma_1,\sigma_2)$. Thus they have the same asymptotic distribution as the complex normalized Wishart matrix, up to a scale factor $\sigma_1$ depending on the considered $M$-estimator. The same conclusion holds for the Fixed Point Estimator \cite{tyler1987distribution, Pascal-07} since it verifies the assumptions of theorem \ref{sect:theorem2} (see \cite{Pascal07b} for its asymptotic distribution).\\

\section{Simulations}
\label{sec:simu}
The results of this paper are illustrated using  the complex analogue of Huber's $M$-estimator as described in \cite{ollila2003robust}. The corresponding weight function $u(.)$ of equation (\ref{eq:VN}) is defined by
\begin{equation}
u(s)=\cfrac{1}{\beta} \, \mbox{min}(1,k^2/s),
\end{equation}
where $k^2$ and $\beta$ depend on a single parameter $0<q< 1$, according to
\begin{equation}q=F_{2m}(2k^2),\label{eq:q}\end{equation}
\begin{equation}\beta=F_{2m+2}(2k^2)+k^2\cfrac{1-q}{m}\label{eq:beta}\end{equation}
where $F_m(.)$ is the cumulative distribution function of a $\chi^2$ distribution with $m$ degrees of freedom. Thus Huber estimate is the solution of
\begin{multline}
\label{eq:hub}
\Mc_{Hub}=\cfrac{1}{N \beta}\dsum_{i=1}^{N}\left[\zg_i\zg_i^H\mathbbm{1}_{\zg_i^H\Mc_{Hub}^{-1}\zg_i\leq k^2}\right.\\\left.+k^2\cfrac{\zg_i\zg_i^H}{\zg_i^H\Mc_{Hub}^{-1}\zg_i}\mathbbm{1}_{\zg_i^H\Mc_{Hub}^{-1}\zg_i> k^2} \right],
\end{multline}
which can be rewritten
\begin{multline}
\label{eq:hub}
\Mc_{Hub}=\cfrac{1}{N \beta}\dsum_{i=1}^{N}\left[\zg_i\zg_i^H\mathbbm{1}_{\zg_i^H\Mc_{Hub}^{-1}\zg_i\leq k^2}\right]\\+\cfrac{1}{N \beta}k^2\dsum_{i=1}^{N}\left[\cfrac{\zg_i\zg_i^H}{\zg_i^H\Mc_{Hub}^{-1}\zg_i}\mathbbm{1}_{\zg_i^H\Mc_{Hub}^{-1}\zg_i> k^2} \right],
\end{multline}
where $\mathbbm{1}$ is the indicator function.

The first summation corresponds to unweighted data which are treated as in the SCM; the second one is associated to normalized data treated as outliers. In a complex Gaussian context and when $N$ tends to infinity, it may be shown that the proportion of data treated with the SCM is equal to $q$. Moreover the choice of $k^2$ and $\beta$ according to (\ref{eq:q}) and (\ref{eq:beta}), leads to a consistant $M$-estimator of the covariance matrix ($\sigma=1$ in equation (\ref{eq:VsL})).

In the following simulations, $q=0.75$.

\subsection{Asymptotic performance of DOA estimated by the MUSIC method, with the SCM and Huber's $M$-estimator}

Now let us turn to theorem \ref{sect:theorem2}. To illustrate our result, we consider a simulation using the MUltiple SIgnal Classification (MUSIC) method, which estimates the Directions Of Arrival (DOAs) of a signal. We consider in this paper a single signal to detect. However, the multi-sources case can be similarly analyzed. Under this assumption, let us define $H(\Mc)$ the estimated DoAs obtained from the MUSIC pseudo-spectrum: $H(\Mc)=\thetac$. 

A $m=3$ uniform linear array (ULA) with half wavelength sensors spacing is used, which receives a Gaussian stationnary narrowband signal with DOA $20^\circ$. The array output is corrupted by an additive noise which is firstly spatially white Gaussian and secondly K-distributed with shape parameter 0.1. Moreover, the SNR per sensor is 5dB and the $N$ snapshots are assumed to be independent. 
The MUSIC method uses the estimation of the covariance matrix with the $N$ snapshots and here, the employed covariance matrix estimators are the SCM and the complex analogue of Huber's $M$-estimator as defined in equation (\ref{eq:hub}).

Figure \ref{fig3} depicts the Root Mean Square Error (RMSE) in degrees, of the DOA estimated with $N$ data for the SCM and for Huber's estimate, when the additive noise is white Gaussian. The RMSE of the DOA estimated with $\sigma_1 \, N$ data with Huber's estimate is also represented. We observe that for $N$ large enough ($N\geq40$), this curve and the SCM one overlap, as expected from theorem (\ref{sect:theorem}). In this example, $\sigma_1=1.067$.

\begin{figure}[H]
\begin{center}
\includegraphics[width=1\linewidth]{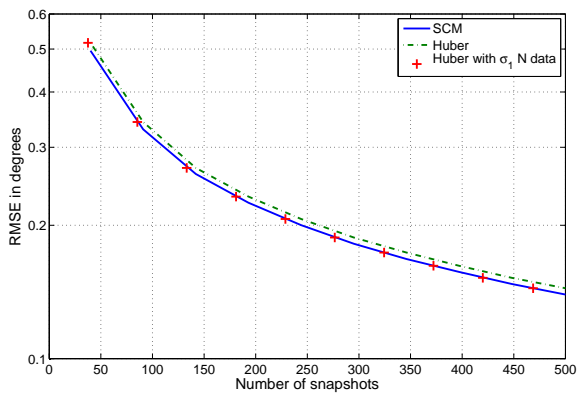}
\caption{One source DOA RMSE ($m=3$ antennas) for Huber's estimate and the SCM, for spatially white Gaussian additive noise.}
\label{fig3}
\end{center}
\end{figure}

Figure \ref{fig4} depicts the RMSE of the DOA estimated with $N$ data for the SCM and for Huber's estimate, when the additive noise is K-distributed with shape parameter 0.1. A shape parameter close to 1  ($\gtrsim 0.9$) indicates a distribution close to the Gaussian distribution whereas it indicates an impulsive noise when the parameter is close to 0 ($\lesssim 0.1$). Thus, the noise being quite impulsive in our example,  we observe that the RMSE of Huber's $M$-estimator is smaller than the SCM, the latter giving worse results than in the Gaussian case. It points out the fact that the SCM gives poor results as soon as the context is far from a Gaussian environment whereas Huber's $M$-estimator is more robust and much more interesting in that case.

\begin{figure}[H]
\begin{center}
\includegraphics[width=1\linewidth]{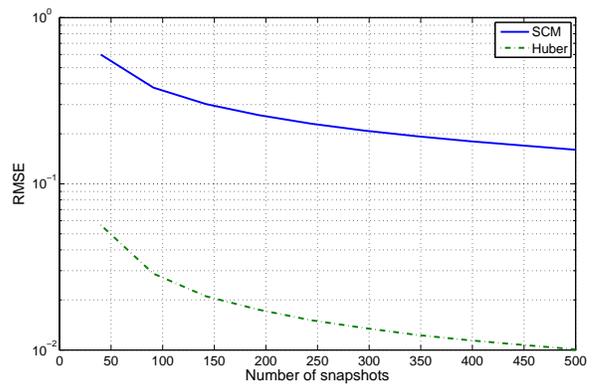}
\caption{One source DOA RMSE ($m=3$ antennas) for Huber's estimate and the SCM, for K-distributed additive noise with a shape parameter $\nu=0.1$.}
\label{fig4}
\end{center}
\end{figure}
\subsection{Asymptotic performance of the ANMF test with the SCM and Huber's $M$-estimator}
Let us give a second illustration of theorem \ref{sect:theorem2}. We consider an adaptive radar receiving a vector $\yg$ of length $m$. The estimated covariance matrix of the environment is $\Mc$ and we try to detect signals of steering vector $\pg$. This steering vector defines the DOA and speed of the target, using the Doppler frequency. The ANMF test statistics \cite{Conte02b} is
\begin{equation}
\Lambda(\Mc|\yg)=\cfrac{|\pg^H\Mc^{-1}\yg|^2}{(\pg^H\Mc^{-1}\yg)(\yg^H\Mc^{-1}\yg)}.
\end{equation}

Firstly, we have considered a Gaussian context and computed $\Lambda(\Mc|\yg)$. In figure \ref{fig5} the vertical scale represents the variance of $\Lambda$ obtained with the SCM and the complex analogue of Huber's $M$-estimator defined in (\ref{eq:VN}). The horizontal scale represents the number of samples used to estimate the covariance matrix. A third curve represents the variance of $\Lambda$ for $\sigma_1 N$ data. As one can see, it overlaps the SCM's curve, illustrating theorem  \ref{sect:theorem2}. The coefficient $\sigma_1$ is equal to $1.067$.
\begin{figure}[H]
\begin{center}
\includegraphics[width=1\linewidth]{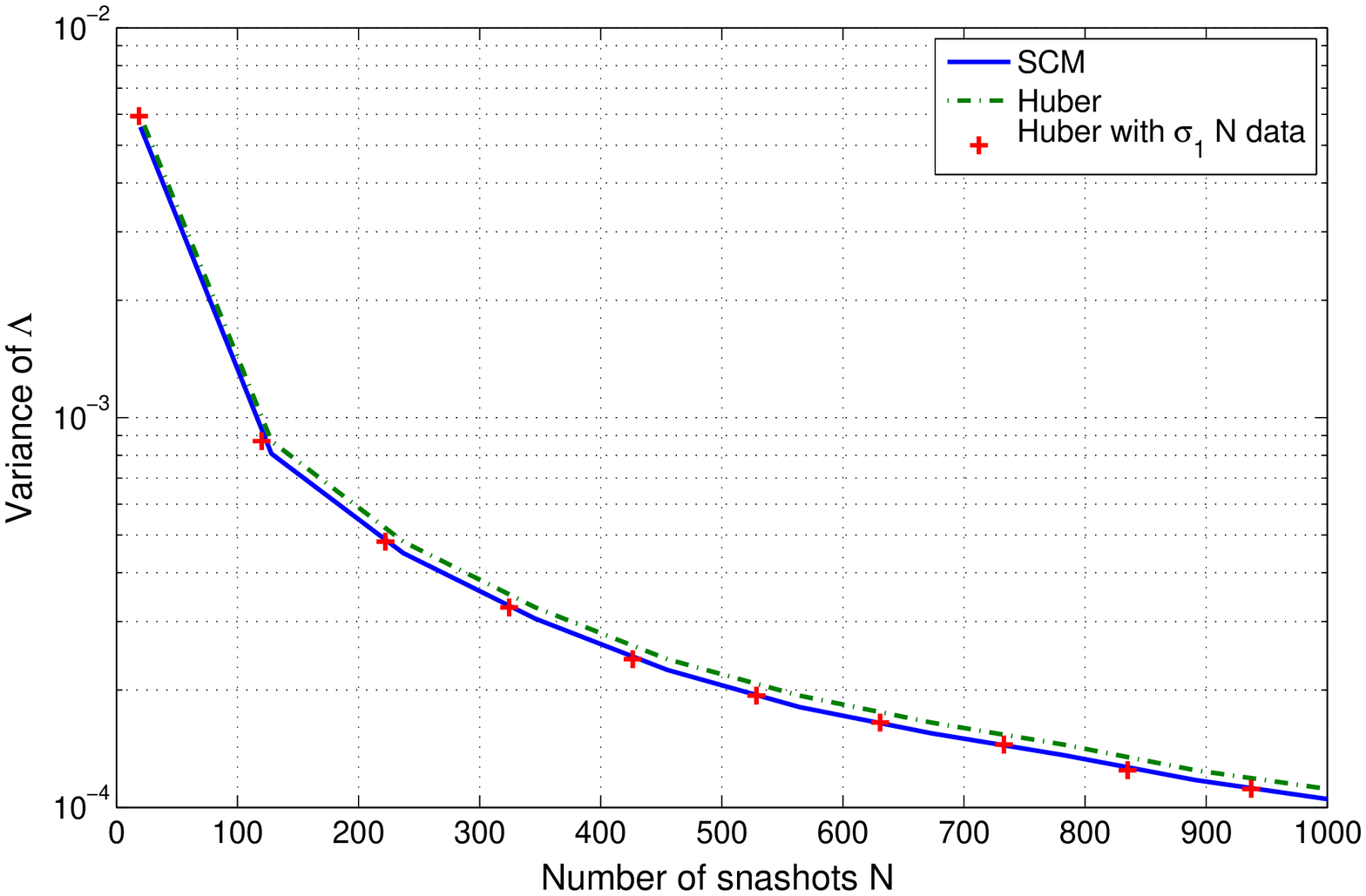}
\caption{Variance on the ANMF detector for Huber's estimate and the SCM estimate, with spatially white Gaussian additive noise.}
\label{fig5}
\end{center}
\end{figure}

Secondly, we have considered a K-distributed environment, with shape parameter firstly equal to 0.1 and then 0.01 for  a more impulsive noise. The figure \ref{fig6} which scales are the same as in figure \ref{fig5}, brings once again to our minds that the SCM is not robust in a non-Gaussian context contrary to Huber's $M$-estimator. Indeed, the more the noise differs from a Gaussian noise, the more the detector's variance is deteriorated in that case while it still gives good results with Huber's $M$-estimator.
\begin{figure}[H]
\begin{center}
\includegraphics[width=1\linewidth]{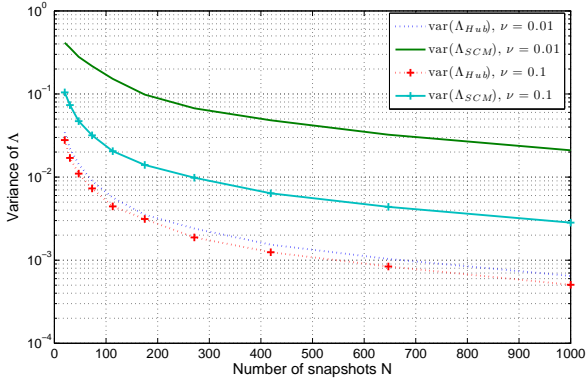}
\caption{Variance on the ANMF detector with the Huber's estimate and the SCM, for K-distributed additive noise with various shape parameters $\nu=0.01$ and $nu=0.1$.} 
\label{fig6}
\end{center}
\end{figure}
\section{Conclusion}
\label{sec:conclu}
In this paper we have analyzed the statistical properties of complex $M$-estimators of the scatter matrix in the framework of complex elliptically distributed data. Firstly, using existing results for real $M$-estimators, we have derived the asymptotic covariance in the complex case. Simulations have checked that when the number of samples increases, the $M$-estimator covariance tends to its theoretical asymptotic value. Secondly, we have extended an interesting property of real $M$-estimators to the complex case. This property states that the asymptotic distributions of any homogeneous function of degree zero of $M$-estimates and Wishart matrices, are the same up to a scale factor. This result has many potential applications in performance analysis of array processing algorithms based on $M$-estimates of the covariance matrix. 
%

\appendix

\begin{proof}{Lemma \ref{lem1}}
\label{appendix}
\subsection{Asymptotic behavior of $\Mc_u$ and $\Mc_v$ }
Let us set
\begin{itemize}
\item $\Mg_u=\Mg_u^{1/2}\Mg_u^{1/2}$,
\item $\Rc_u=\Mg_u^{-1/2}\Mc_u\Mg_u^{-1/2}$ and
\item $\kg_n=\Mg_u^{-1/2}\ug_n$.
\end{itemize}
Since $\Rc_u$ is a consistent estimate of $\Ig_m$, when $N\longrightarrow\infty$, we have $\Rc_u= \Ig_{2m}+\Delta \Rg_u$, considering $\Delta \Rg_u$ small. Thus we have
\begin{eqnarray}
 u\be\ug_n^T\Mc_u^{-1}\ug_n\en&=& u\be\kg_n^T\Rc_u^{-1}\kg_n\en\nonumber\\
  &=&  u\be\kg_n^T(\Ig_{2m}+\Delta \Rg_u)^{-1}\kg_n\en.\nonumber
\end{eqnarray}

A first order expansion of $\Rc_u$ gives $(\Ig_{2m}+\Delta \Rg_u)^{-1}\thickapprox\Ig_{2m}-\Delta \Rg_u$ which leads to
\begin{eqnarray}
  u\be\kg_n^T\Rc_u^{-1}\kg_n\en&=&u\be\|\kg_n\|^2 - \kg_n^T\Delta\Rg_u\kg_n\en\\
&=&  u\be\|\kg_n\|^2\en - u'\be\|\kg_n\|^2 \en \kg_n^T\Delta\Rg_u\kg_n\nonumber\\
&=& a_n+b_n \kg_n^T\Delta\Rg_u\kg_n,\nonumber
\end{eqnarray}
with $a_n=u\be\|\kg_n\|^2\en$ and $b_n= - u'\be\|\kg_n\|^2\en$.
From equation (\ref{eq:Mcu}), we obtain
\begin{equation}
 \Ig_{2m}+\Delta \Rg_u=\cfrac{1}{N}\sum_{n=1}^N a_n\kg_n\kg_n^T+ \cfrac{1}{N}\sum_{n=1}^N b_n \be\kg_n^T\Delta\Rg_u\kg_n\en\kg_n\kg_n^T.\nonumber
\end{equation}
Since $\vec(\kg_n\kg_n^T)=(\kg_n\otimes\kg_n)$ and $\kg_n^T\Delta\Rg_u\kg_n=\vec\be\kg_n^T\Delta\Rg_u\kg_n\en=(\kg_n\otimes\kg_n)^T\vec(\Delta\Rg_u)$, one has the following equation:
\begin{multline}
\vec(\Ig_{2m})+\vec(\Delta \Rg_u)=\cfrac{1}{N}\sum_{n=1}^N a_n\kg_n\otimes\kg_n\\+ \cfrac{1}{N}\sum_{n=1}^N b_n (\kg_n\otimes\kg_n)^T\vec(\Delta\Rg_u)(\kg_n\otimes\kg_n).
\end{multline}

This leads to
\begin{multline}
\be\Ig_{4m^2}-\cfrac{1}{N}\sum_{n=1}^N b_n(\kg_n\otimes\kg_n)(\kg_n\otimes\kg_n)^T\en \vec(\Delta\Rg_u)\\=\cfrac{1}{N}\sum_{n=1}^N a_n(\kg_n\otimes\kg_n )- \vec(\Ig_{2m}).
\end{multline}

Let us denote \\ $\alphag_N=\be\Ig_{4m^2}-\cfrac{1}{N}\disp\sum_{n=1}^N b_n(\kg_n\otimes\kg_n)(\kg_n\otimes\kg_n)^T\en$. Then the previous equation is equivalent to
\begin{equation}\label{eq:dmu}
\vec(\Delta\Rg_u)=\alphag_N^{-1}\be\cfrac{1}{N}\sum_{n=1}^N a_n(\kg_n\otimes\kg_n )- \vec(\Ig_{2m})\en.
\end{equation}
In (\ref{eq:dmu}) we have
\begin{itemize}
\item $\alphag_N \overset{a.s}{\rightarrow} \alphag=\be\Ig_{4m^2}-E[b\,(\kg\otimes\kg)(\kg\otimes\kg)^T]\en$ where $\kg\sim \E(\mathbf{0_{2m,1}}, \sigma_r \Ig_{2m},g_{\zg})$ and $b= -u'\be\|\kg\|^2\en$
\item $\cfrac{1}{N}\disp\sum_{n=1}^N a_n(\kg_n\otimes\kg_n )\overset{a.s}{\rightarrow}  E\left[a\, \kg\otimes\kg\right]=E\left[u\be\|\kg\|^2\|\en \kg\otimes\kg\right]=\vec\be E\left[u\be\|\kg\|^2\|\en \kg\kg^T\right]\en=\vec\be \Ig_{2m}\en$ using (\ref{eq:VsL}) with $\Mg_u$ and replacing $\ug_n$ by $\Mg_u^{1/2}\kg_n$.
\end{itemize}

Now let us denote $\wg_N=\sqrt{N}\be\cfrac{1}{N}\disp\sum_{n=1}^N a_n(\kg_n\otimes\kg_n )- \vec(\Ig_{2m})\en$. We have $\wg_N\overset{d}{\longrightarrow} \wg$, where $\wg$ follows a zero mean Gaussian distribution.

Consequently, the Slutsky theorem gives
\begin{equation}
\sqrt{N}\vec(\Delta\Rg_u)=\alphag_N^{-1}\wg_N\overset{d}{\longrightarrow}\alphag^{-1}\wg.
\label{eq:dru}
\end{equation}
Moreover, one can notice that
\begin{eqnarray*}(\Mg_u^{1/2}\otimes\Mg_u^{1/2})\vec(\Delta\Rg_u) &=&\vec(\Mg_u^{1/2}\Delta\Rg_u\Mg_u^{1/2}) \\ &=&\vec(\Mc_u-\Mg_u),
\end{eqnarray*}

which gives, taking into account equation (\ref{eq:dru}),
\begin{equation}\label{eq:result-Mu}
\sqrt{N}\vec(\Mc_u-\Mg_u)\overset{d}{\longrightarrow}(\Mg_u^{1/2}\otimes\Mg_u^{1/2})\alphag^{-1}\wg.
\end{equation}
Using equation (\ref{eq:Mv=TMuT}), we also have
\begin{multline}\label{eq:result-Mu}
\sqrt{N}\vec(\Mc_v-\Mg_v)=\sqrt{N}(\Trg\otimes\Trg)\vec(\Mc_u-\Mg_u)\\\overset{d}{\longrightarrow}(\Trg\otimes\Trg)(\Mg_u^{1/2}\otimes\Mg_u^{1/2})\alphag^{-1}\wg.
\end{multline}
where $\Trg = \begin{pmatrix}
\mathbf{0}_{m,m} & -\Ig_m \\
\Ig_m& \mathbf{0}_{m,m}
\end{pmatrix}
$.

\subsection{Asymptotic behavior of $\Mc_R$}
Let us denote $\Rc_R=\Mg_R^{-1/2}\Mc_R\Mg_R^{-1/2}$. Since $\Mg_R=\Mg_u$, one has $\kg_n=\Mg_R^{-1/2}\ug_n$.

For all matrices of the form $f(\Ag)$, $\Trg f(\Ag)=f(\Ag)\Trg$. Therefore, since $\Mg_R=f(\Mg_c)$, $\Trg\Mg_R=\Mg_R\Trg$. One has $\Mg_R=\Trg^T \Mg_R \Trg=\be\Trg^T \Mg_R^{1/2} \Trg\en \be\Trg^T \Mg_R^{1/2} \Trg \en$. Therefore $\Mg_R^{1/2}=\Trg^T \Mg_R^{1/2} \Trg$ and $\Trg\Mg_R^{1/2}=\Mg_R^{1/2}\Trg$. This leads to $\Trg\,\kg_n=\Trg\Mg_R^{-1/2}\ug_n=\Mg_R^{-1/2}\vg_n$.

 When $N\longrightarrow\infty$, since $\Rc_R$ is a consistent estimate of $\Ig_{2m}$, $\Rc_R=\Ig_{2m}+\Delta \Rg_R$, with  $\Delta \Rg_R$ small. 
 Similarly to the first part of the proof on has,
\begin{eqnarray*}
u\be\ug_n^T\Mc_R^{-1}\ug_n\en&=& u\be\vg_n^T\Mc_R^{-1}\vg_n\en\\&=&u\be\kg_n^T\Rc_R^{-1}\kg_n\en\\&=& a_n+b_n \kg_n^T\Delta\Rg_R\kg_n.
 \end{eqnarray*}
  
Thus, deriving from equation (\ref{eq:MR2}) we obtain
\begin{eqnarray*}
&&\Ig_{2m}+\Delta \Rg_R \\&&=\cfrac{1}{2N}\sum_{n=1}^N a_n\kg_n\kg_n^T+\cfrac{1}{2N}\sum_{n=1}^N a_n\,\Trg\kg_n\kg_n^T\Trg^T\\&&+ \cfrac{1}{2N}\sum_{n=1}^N b_n \be\kg_n^T\Delta\Rg_R\kg_n\en\kg_n\kg_n^T\\&&+ \cfrac{1}{2N}\sum_{n=1}^N b_n \be\kg_n^T\Trg^T\Delta\Rg_R\Trg\kg_n\en\Trg\kg_n\kg_n^T\Trg^T.
\end{eqnarray*}
Then using the $\vec$ operator, this equation leads to
\begin{eqnarray*}
&& \vec(\Ig_{2m})+\vec(\Delta \Rg_R)\\&&=\cfrac{1}{2N}\sum_{n=1}^N a_n\kg_n\otimes\kg_n +\cfrac{1}{2N}\sum_{n=1}^N a_n\Trg\kg_n\otimes\Trg\kg_n\\&&+ \cfrac{1}{2N}\sum_{n=1}^N b_n  (\kg_n\otimes\kg_n)^T\vec(\Delta\Rg_R) (\kg_n\otimes\kg_n)\\
&&+ \cfrac{1}{2N}\sum_{n=1}^N b_n (\Trg\kg_n\otimes\Trg\kg_n)^T\vec(\Delta\Rg_R) (\Trg\kg_n\otimes\Trg\kg_n).\\
\end{eqnarray*}
This is equivalent to
\begin{eqnarray*}
&&\be\Ig_{4m^2}-\cfrac{1}{2N}\sum_{n=1}^N b_n(\kg_n\otimes\kg_n)(\kg_n\otimes\kg_n)^T-\right.\\&&\left.\cfrac{1}{2N}\sum_{n=1}^N b_n (\Trg\otimes\Trg)(\kg_n\otimes\kg_n)(\kg_n\otimes\kg_n)^T(\Trg\otimes\Trg)^T\en \\ &&\vec(\Delta\Rg_R)\\
&&=\cfrac{1}{2N}\sum_{n=1}^N a_n(\kg_n\otimes\kg_n )\\&&+\cfrac{1}{2N}\sum_{n=1}^N a_n(\Trg\otimes\Trg) (\kg_n\otimes\kg_n )- \vec(\Ig_{2m}).
\end{eqnarray*}

which leads to
\begin{eqnarray*}
&&\vec(\Delta\Rg_R)=\alphagt_N^{-1}\be\cfrac{1}{2N}\sum_{n=1}^N a_n(\kg_n\otimes\kg_n )\right.\\&&\left.+\cfrac{1}{2N}\sum_{n=1}^N a_n(\Trg\otimes\Trg)(\kg_n\otimes\kg_n )- \vec(\Ig_{2m})\en,
\end{eqnarray*}
where $\alphagt_N=\be\Ig_{4m^2}-\cfrac{1}{2N}\disp\sum_{n=1}^N b_n(\kg_n\otimes\kg_n)(\kg_n\otimes\kg_n)^T\right.$\\$\left.-\cfrac{1}{2N}\sum_{n=1}^N b_n(\Trg\otimes\Trg)(\kg_n\otimes\kg_n)(\kg_n\otimes\kg_n)^T(\Trg\otimes\Trg)^T\en$.

Using previous notation $\wg_N$, we obtain
\begin{eqnarray*}
\vec(\Delta\Rg_R) &=&\cfrac{1}{2\sqrt{N}} \,\alphagt_N^{-1}\be \Ig_{4m^2}+(\Trg\otimes\Trg)\en\wg_N
\end{eqnarray*}

One can notice that $ \alphagt_N\overset{a.s}{\longrightarrow}\alphag$ since the $\Trg\kg_n$ have the same distribution as the $\kg_n$. Moreover $\alphagt_N^{-1}\,(\Trg\otimes\Trg)=(\Trg\otimes\Trg)\,\alphagt_N^{-1}$ and  $(\Mg_u^{1/2}\otimes\Mg_u^{1/2})(\Trg\otimes\Trg)=(\Trg\otimes\Trg)(\Mg_u^{1/2}\otimes\Mg_u^{1/2})$. Therefore we obtain

\begin{eqnarray*}
\sqrt{N}\vec(\Mc_R-\Mg_R)\overset{d}{\longrightarrow}\cfrac{1}{2} \be(\Mg_u^{1/2}\otimes\Mg_u^{1/2})\alphag^{-1} \right.\\\left.+(\Trg\otimes\Trg)(\Mg_u^{1/2}\otimes\Mg_u^{1/2})\alphag^{-1}\wg\en.
\end{eqnarray*}
This leads to the conclusion that $\Mc_R$ shares the same asymptotic distribution as  $\cfrac{1}{2}\be\Mc_u+\Mc_v\en$.
\end{proof}

\bibliographystyle{IEEEtran} 
\bibliography{biblio_these}
\end{document}